\documentclass[conference]{IEEEtran}
\IEEEoverridecommandlockouts
\usepackage{cite}
\usepackage{amsmath,amssymb,amsfonts}
\usepackage{algorithmic}
\usepackage{graphicx}
\usepackage{textcomp}
\usepackage{xcolor}
\def\BibTeX{{\rm B\kern-.05em{\sc i\kern-.025em b}\kern-.08em
    T\kern-.1667em\lower.7ex\hbox{E}\kern-.125emX}}

    \usepackage{amsfonts}     
\usepackage{graphicx}
\usepackage{color}
\usepackage{pgf, tikz, pgfplots}
\usepackage{siunitx}
\pgfplotsset{compat=1.17} 
\usepackage{amssymb} 
\usepackage{bm}
\usepackage{amsthm}
\usepackage{acronym}
\usepackage{tikz}
\usepackage{mathrsfs}
\usepackage{enumerate}
\usetikzlibrary{arrows,calc,fit,matrix,positioning,shapes,shadows,trees,mindmap,tikzmark,arrows.meta,angles,quotes,babel,spy}

\usepackage{booktabs}    

\newtheorem{theorem}{Theorem}

\DeclareMathOperator*{\mathboxplus}{\boxplus}
\newcommand{\PermAut}{$\mathrm{Aut}(\mathcal{C})$}
\newcommand{\Aut}{$\mathrm{GAut}(\mathcal{C})$}
\DeclareMathOperator*{\mPermAut}{\mathrm{Aut}(\mathcal{C})}

\begin{document}

\title{Generalized
Automorphisms of Channel Codes: Properties, Code Design, and a Decoder\\
\thanks{The author would like to thank Frank Herrlich for the inspiring discussions in deriving Theorem \ref{theorem:core}.

This work has received funding from the 
German Federal Ministry of Education and Research (BMBF) within the project Open6GHub (grant agreement 16KISK010).}
}

\author{\IEEEauthorblockN{Jonathan Mandelbaum, Holger Jäkel, and Laurent Schmalen}
\IEEEauthorblockA{Communications Engineering Lab, Karlsruhe Institute of Technology (KIT), 76131 Karlsruhe, Germany\\
\texttt{jonathan.mandelbaum@kit.edu}}
}

\maketitle

\begin{abstract}
Low\--density parity\--check codes together with belief propagation (BP) decoding are known to be well\--performing
for large block lengths. However, for short block lengths there is still a considerable gap between the performance of BP decoding and
maximum likelihood decoding. Different ensemble decoding schemes such as, e.g., automorphism ensemble decoding (AED), can reduce this gap in short block length regime.
We propose generalized AED (GAED) that uses automorphisms according to the definition in linear algebra.
Here, an automorphism of a vector space is defined as a linear, bijective self-mapping, whereas in coding theory self-mappings that are scaled 
permutations are commonly used.
We show that the more general definition leads to an explicit joint construction of codes and automorphisms, and significantly enlarges the search space for automorphisms of existing linear codes.
Furthermore, we prove the concept that generalized automorphisms can indeed be used to improve decoding.
Additionally, we propose a code construction of linear codes enabling
the construction of codes with suitably designed automorphisms. Finally, we analyze the decoding performances of GAED for some of our constructed codes.
\end{abstract}

\begin{IEEEkeywords}
 	generalized automorphism groups; generalized automorphism ensemble decoding; short block lengths codes
\end{IEEEkeywords}

\section{Introduction}
\label{sec:introduction}

Low\--density parity\--check (LDPC) codes are a prominent example of error correcting codes that are used
in a large variety of applications.
 They were first proposed by Gallager together with a low\--complexity message passing decoding algorithm \cite{Gallager_LDPC_diss}, often called belief propagation
 (BP) decoding.
 LDPC codes with large block lengths can achieve low error rates and close\--to\--capacity performance\cite{MCT08}.
 Yet, many low\--latency communication systems,
 such as the internet of things, autonomous driving, or communicating control commands, require codes of short block lengths. %
 For such codes, there is still a considerable gap between the performance of BP decoding 
 and maximum likelihood (ML) decoding.
 This can be attributed to the poor structural properties
 of the parity\--check matrix (PCM) of short codes for BP decoding, e.g., a large number of short cycles and of non\--zero entries, together with the sub\--optimality of the message passing decoding algorithm.

 For any binary\--input memoryless symmetric output channel (BMSC), such as the additive white Gaussian noise (AWGN) channel, correctly decoding a received word with a symmetric message passing decoding algorithm depends only on the noise superimposed by the channel, not on the transmitted codeword itself \cite[Lemma 4.90]{MCT08}.
Therefore, different variations of the algorithm such as
 multiple bases belief propagation (MBBP) \cite{MBBP2} and automorphism ensemble decoding (AED) \cite{stuttgart_ldpc_aed} were proposed.
Herein, an ensemble of noise representations or decoding algorithms 
aims to improve decoding.
 Following the latter statement, MBBP decoding performs ensemble decoding, in which the received sequence is decoded 
 by multiple different BP decoders in parallel.
  Similarly, the idea of the AED, as introduced in \cite{AED_RMcodes},
is to use the automorphism group defined in \cite{MacWilliamsSloane} to exploit different noise representations in parallel paths which is discussed in more detail in Sec. \ref{sec:aed}.

From a structural perspective, it is reasonable
 to restrict the definition of the automorphism groups of linear codes to consist
 solely of suitable \emph{scaled permutations} (Sec. \ref{subsec:aut}) such that, i.a., codewords of the same Hamming weight are mapped onto each other.
 The (permutation) automorphism groups of classical and modern codes are extensively discussed in the literature \cite{stuttgart_ldpc_aed}, \cite{MacWilliamsSloane},\cite{BCH_Aut}, 
 \cite{Aut_PolarCodes_Geiselhart}.  
To improve decoding, automorphisms must be chosen carefully because the generated diversity might be absorbed
by the symmetry of the decoding algorithm \cite{stuttgart_ldpc_aed},\cite{AED_RMcodes},\cite{EnhancingCyclic}.
In linear algebra, the definition of automorphisms of a vector space is broader and includes all linear, bijective self\--mappings, a fact that is going to be used and analyzed in this paper.

 We show that automorphisms according to this
 more general definition can be beneficial for decoding. Thus, we significantly enlarge the search space for suitable automorphisms.
 To do so, we describe the (generalized) automorphism group of linear codes defined by a PCM. Additionally, we propose a generalized AED (GAED) algorithm such that generalized automorphisms can be used for decoding.  
Furthermore, we propose a code construction algorithm for linear codes together with specific automorphisms which enables designing suitable codes for GAED.
Finally, we present and compare the performance of GAED for some of our constructed codes. 

\tikzset{myblock/.style={rectangle, draw, thin, minimum width=0.6cm, minimum height=0.6cm},font=\footnotesize,align=center}%
\tikzset{mywideblock/.style={rectangle, draw, thin, minimum width=1cm, minimum height=0.6cm},font=\footnotesize,align=center}%

\newcommand{\myline}[2]{
\path(#1.east) --(#2.west)  coordinate[pos=0.4](mid);
\draw[-latex] (#1.east) -| (mid) |- (#2.west);
}
\section{Preliminaries}%
\label{sec:Preliminaries}
A linear block code $\mathcal{C}(n,k)$ over a field~$\mathbb{F}$ forms a subspace of the vector space $\mathbb{F}^n$.
It consists of $|\mathbb{F}|^k$ distinct elements from $\mathbb{F}^n$,
where the parameters $n\in \mathbb{N}$ and $k\in \mathbb{N}$ are called block length and information length, respectively.
A linear code $\mathcal{C}(n,k)$ can be described as the row span of a generator matrix $\bm{G}\in \mathbb{F}^{k\times n}$ or as the null space of its PCM $\bm{H}\in \mathbb{F}^{(n-k)\times n}$, which we assume to be of full rank \cite{MacWilliamsSloane}:
$$\mathcal{C}\left(n,k\right)=\left\{\bm{x}\in \mathbb{F}^n:\bm{H} \bm{x} = \bm{0}\right\}=\mathrm{Null}(\bm{H}).$$
Note that in contrast to most coding literature, we denote vectors as column vectors
in order to directly account for matrix-vector operations common in linear algebra.

For the sake of simplicity, the parameters of the code will be $(n,k)$ and omitted if they are clear from the context.

 BP decoding is an iterative message passing algorithm over the Tanner graph of the code. 
Messages are log\--likelihood ratios (LLRs) that are iteratively propagated along the edges and updated in the nodes of the graph \cite{MCT08}.
 Every linear code can be represented by possibly different PCMs or, equivalently, different Tanner graphs. Although the code is the same, BP decoding behaves differently since the degrees of the nodes and the short cycles within the Tanner graph mainly dominate their performance.
For more details on BP decoding, the interested reader is referred to \cite{MCT08}.

\vspace{-1mm}

\subsection{Automorphism Group}\label{subsec:aut}
In this section, we discuss two different definitions of the automorphism group of a code. To this end, let $\mathcal{C}\subset \mathbb{F}^n$ be a linear code defined over an arbitrary finite
field $\mathbb{F}$ and $\mathrm{S}_n$ be the symmetric group.

First, according to \cite{MacWilliamsSloane}, the automorphism group is defined as the set of mappings $\pi^{(a)}$ %
with
\begin{equation*}
    \mathrm{Aut}(\mathcal{C}):=  \!  
    \left\{\! \pi^{(a)}: \mathcal{C}\!\to\!\mathcal{C}, \bm{x}\!\mapsto\!a\pi(\bm{x})\!:\! \pi\!\in \mathrm{S}_n, a\in\mathbb{F}\!\setminus\!\{0\}    \right\},
\end{equation*}
where $a{\pi(\bm{x})=\begin{pmatrix}
    ax_{\pi(1)},&\cdots &,ax_{\pi(n)}
\end{pmatrix}}^\mathsf{T}$ can be interpreted as a \emph{scaled permutation}.
Note that for binary codes, the scaling factor $a$ must be $1$ and, hence, is omitted, i.e., $\pi:=\pi^{(1)}$.

Second, in linear algebra another definition is standard. Here, 
the automorphism group $\mathrm{GAut}(\mathcal{C})$ of a vector space $\mathcal{C}$ is defined as all
    linear, bijective self-mappings \cite{bhattacharya1994basic}, i.e., \vspace{-1mm}
    \begin{equation*}
        \mathrm{GAut}(\mathcal{C}):=    
        \left\{ \tau: \mathcal{C}\rightarrow\mathcal{C}: \text{ $\tau$ linear, $\tau$ bijective} \right\}.\vspace{-2mm}
\end{equation*}
Since scaled permutations are linear, bijective mappings and, thus, $\mathrm{Aut}(\mathcal{C})\subseteq \mathrm{GAut}(\mathcal{C})$, the latter definition is more general.

\vspace*{-1mm}
\subsection{Automorphism Ensemble Decoding}
\label{sec:aed}
Fig. \ref{figure:aed} shows the block diagram of AED as proposed in \cite{AED_RMcodes} for binary codes. 
A codeword ${\bm{x}\in \mathcal{C}\subset \mathbb{F}_2^n}$ is transmitted over a BMSC yielding a received word $\bm{y}\in \mathcal{Y}^n$, where $\mathcal{Y}$ denotes the channel output alphabet, and resulting in the
bit-wise LLR vector  ${\bm{L}:=\left(L(y_j|x_j)\right)_{j=1}^n\in \mathbb{R}^n}$.
Instead of decoding the LLR vector  $\bm{L}$ with only one decoder, it is propagated along $K$ different paths. In each path ${i\in \{1,\dots, K\}}$, the LLR vector  $\bm{L}$ is preprocessed
according to a permutation automorphism ${\pi_i\in \mathrm{Aut}(\mathcal{C})}$ as $\pi_i(\bm{L})$.
Afterward, the permuted LLRs $\pi_i(\bm{L})$ are decoded using an arbitrary decoding algorithm of the code $\mathcal{C}$.
This yields several estimates of the permuted versions of the codeword
 $\pi_i(\hat{\bm{x}}_i)$.
Hereby, every path might possess a distinct decoder.
In the next step, applying the inverse automorphisms results in $K$
 estimates ${\hat{\bm{x}}_i\in\mathbb{F}_2^n}$ of the transmitted codeword ${\bm{x}\in\mathbb{F}_2^n}$. Finally,
 the best candidate is chosen according to an \emph{ML\--in\--the\--list} rule \cite{AED_RMcodes}.

\begin{figure}[t]
    \centering
\begin{tikzpicture}	  

    \node (input) at (-1.5,-0.5) {$\bm{y}$};    
    \draw[fill] (-0.975,-0.5) circle (1pt);
    \node (pi1) [mywideblock] at (0,0.3) {$\pi_1$};
    \node (pik) [mywideblock] at (0,-1.3) {$\pi_{K}$};

    \node (debox_1) [myblock, right=0.8cm of pi1] {Dec.};
    \node (debox_K) [myblock, right=0.8cm of pik] {Dec.};

    \node (pi_inv_1) [mywideblock,right=0.8cm of debox_1]  {$\pi_1^{-1}$};
    \node (pi_inv_k) [mywideblock,right=0.75cm of debox_K]  {$\pi_{K}^{-1}$};

    \node (ml_1) [right=0.8cm of pi_inv_1] {};
    \node (ml_K) [right=0.84cm of pi_inv_k] {};

    \node (ml_box) [rectangle, draw, thin, minimum width=0.6cm, minimum height=2.2cm] at (5,-0.5)  {\rotatebox{90}{ML-in-the-list}};
    \node (output) [right=0.3cm of ml_box] {$\hat{\bm{x}}$};
    
    \node (dots_box) [below=0cm of pi1]  {$\vdots$};
    \node (dots_box) [below=0cm of debox_1]  {$\vdots$};
    \node (dots_pi_inv_box) [below=0cm of pi_inv_1] {$\vdots$};
    \myline{input}{pi1};
    \myline{input}{pik};

    \draw [-latex] (pi1) -- (debox_1);
    \draw [-latex] (pik) -- (debox_K);

    \draw [-latex] (debox_1) -- (pi_inv_1);
    \draw [-latex] (debox_K) -- (pi_inv_k);

    \draw [-latex] (pi_inv_1.east) -- (ml_1);
    \draw [-latex] (pi_inv_k) -- (ml_K);	

    \draw [-latex] (ml_box) -- (output);	

\end{tikzpicture}
\caption{Block diagram of an AED. $K$ different automorphism ${\pi_i \in \mPermAut}$ are chosen from the permutation automorphism group \cite{AED_RMcodes}.}\label{figure:aed}
\vspace{-0.5cm}
\end{figure}
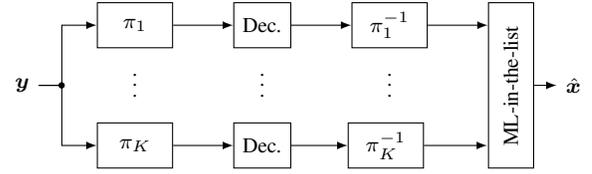

AED relies on the assumption that if decoding fails in one path, it may succeed in another path. This is not always
the case. First, it depends on the interaction of the respective automorphism and the decoder in a path. Second, 
the ensemble of automorphisms for the different paths must be chosen carefully to improve decoding performance \cite{Aut_PolarCodes_Geiselhart}.

We consider linear codes that are decoded with a BP decoder using a flooding schedule.
As discussed in \cite{stuttgart_ldpc_aed} and \cite{EnhancingCyclic}, the known 
automorphisms from the permutation automorphism group 
cannot be used to improve BP decoding for a 
large variety of LDPC codes since their diversity is absorbed 
by the symmetry of the PCM resulting from code construction.
Thus, either the PCM must be altered after construction as in \cite{stuttgart_ldpc_aed} or other construction 
methods must be used as in \cite{EnhancingCyclic}.

\vspace{-1mm}

\subsection{Frobenius Normal Form}
\label{sec:frobenius}
The code construction presented below in Sec. \ref{sec:code-construction} is based on the Frobenius normal form. To this end, let $\bm{T}\in \mathrm{GL}_n(\mathbb{F})$, with $\mathrm{GL}_n(\mathbb{F})$ denoting the general linear group, be a non-singular matrix describing a linear, bijective self\--mapping $\tau:\mathbb{F}^n\rightarrow \mathbb{F}^n$ via $\tau(\bm{x}):= \bm{T} \bm{x}$.
Then, there exists a matrix ${\bm{F}\in \mathrm{GL}_n(\mathbb{F})}$
  of the form 
  \begin{equation*}
    \bm{F}=\begin{pmatrix}
      \bm{B}_{f_1}& 0& \cdots&0 \\
      0 & \ddots & \ddots &\vdots \\
      \vdots & \ddots & \ddots  & 0 \\
      0 & \cdots & 0 & \bm{B}_{f_{j}}
    \end{pmatrix}
  \end{equation*}
  with ${j\in \mathbb{N}}$, consisting of companion matrices
 $\bm{B}_{f_i}$ \cite[p. 106]{MacWilliamsSloane} of size $d_i\times d_i$ of a polynomial $f_i(x)=\sum_{\ell=0}^{d_i} \alpha_{i, \ell} x^\ell$ with $ \alpha_{i, \ell}\in \mathbb{F}$,
together with a matrix $\bm{S}_{\mathrm{F}}\in \mathrm{GL}_n(\mathbb{F})$ such that 
$
  \bm{T}=\bm{S}_{\mathrm{F}}^{-1} \bm{F}  \bm{S}_{\mathrm{F}}$ \cite{bhattacharya1994basic}. 
The matrix $\bm{F}$ is called the \emph{Frobenius normal form} or, equivalently, \emph{rational canonical form} of $\bm{T}$ and
 is determined by $\bm{T}$ except for the order of the 
blocks~$\bm{B}_{f_{i}}$.

\section{Generalized Automorphisms of Codes}
This section investigates the general automorphism group \Aut{} of linear codes. Afterward, we propose code design and an adaption of AED that take advantage of using general automorphisms from $\text{\Aut{}}$.

\subsection{Automorphism Group of Parity\--Check Codes}
\label{sec:main-sec}
In the following, we will derive algebraic properties providing automorphisms of arbitrary codes and enabling a joint construction of linear codes together with their automorphisms. This will be approached by finding automorphisms of $\mathbb{F}^n$ and restricting them to the code subspace ${\mathcal{C}\subset\mathbb{F}^n}$. Directly finding automorphisms of $\mathcal{C}$ and exploiting the associated flexibility is part of our ongoing research.

A linear, bijective self\--mapping ${\tau:\mathbb{F}^n\rightarrow\mathbb{F}^n}$ based on a non-singular transformation matrix $\bm{T}\in \mathrm{GL}_n(\mathbb{F})$ is an automorphism of a code $\mathcal{C}$ if and only if
$$\forall \bm{x}\in \mathcal{C}:\quad \bm{H} \bm{x}=\bm{0}\Longleftrightarrow \bm{H} \bm{T} \bm{x}=\bm{0}.$$
Thus, in order to identify the automorphisms of a code, non-singular matrices can be used that retain the null space 
of $\bm{H}$ under right multiplication, i.e.,
\begin{equation}\label{equation:main_theorem}
  \mathrm{Null}(\bm{H})=\mathrm{Null}(\bm{H} \bm{T}),  
\end{equation}
which is investigated below in Theorem \ref{theorem:core}. 
In order to state the theorem, the following definitions are useful:  
\begin{align*}
    \mathcal{T}
    &:=
    \left\{
    \bm{T} \in \mathrm{GL}_n(\mathbb{F}): \bm{T} \text{ fulfills (\ref{equation:main_theorem}) } 
    \right\},
    \\
    \mathcal{Z}(n,k)
    &:=
    \left\{\bm{Z} \in \mathrm{GL}_n(\mathbb{F}):
    \bm{Z}=
     \begin{bmatrix}  \bm{C} &\bm{0}_{(n-k)\times k} \\
     \bm{D}& \bm{E}\end{bmatrix}  \right\}
     ,
\end{align*}
with $\bm{C}\in\mathbb{F}^{(n-k)\times (n-k)}$, $\bm{D}\in\mathbb{F}^{k\times (n-k)}$, and $\bm{E}\in\mathbb{F}^{k\times k}$. As before, for the sake of simplicity, the parameters $(n,k)$ will be omitted if they are clear from the context.

\begin{theorem}\label{theorem:core} 
Let $\bm{H}\in \mathbb{F}^{(n-k)\times n}$ be of rank $n-k$. 
     Then, $\mathcal{T}$ forms a subgroup
     of the general linear group $\mathrm{GL}_n(\mathbb{F})$ which is conjugated to the matrices in $\mathcal{Z}$, i.e.: 
 $\bm{T}\in\mathcal{T}$ if and only if there exists $\bm{Z}\in\mathcal{Z}$ and $\bm{A}\in\mathrm{GL}_n(\mathbb{F})$ such that $\bm{T}=\bm{A} \bm{Z}  \bm{A}^{-1}$,
where $\bm{A}\in \mathrm{GL}_n(\mathbb{F})$ is a non-singular matrix, termed
\emph{code characterization matrix (CCM)}, such that
\begin{equation}\label{eq:htohtilde}
\bm{H} \bm{A}=\tilde{\bm{H}}=\begin{bmatrix}
    \bm{I}_{(n-k)\times (n-k)}& \bm{0}_{(n-k)\times k}
\end{bmatrix}.
\end{equation}
\end{theorem}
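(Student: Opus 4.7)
The plan is to reduce the null-space condition to a transparent one by changing basis with the CCM~$\bm{A}$, and then read off the block structure directly.

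First, I would establish the existence of the CCM~$\bm{A}$ such that $\bm{H}\bm{A}=\tilde{\bm{H}}=[\bm{I}_{n-k}\ \bm{0}]$. Since $\bm{H}$ has rank $n-k$, its columns span~$\mathbb{F}^{n-k}$; choosing $n-k$ independent columns as the first block and any completion to a basis of~$\mathbb{F}^n$ as the remaining columns, one obtains an invertible change-of-basis matrix $\bm{A}$ which performs exactly the required column reduction. (Any PCM in reduced echelon form makes this particularly transparent.)

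Next I would translate the defining condition of $\mathcal{T}$. Writing $\bm{H}=\tilde{\bm{H}}\bm{A}^{-1}$ and substituting $\bm{T}=\bm{A}\bm{Z}\bm{A}^{-1}$, the equality $\mathrm{Null}(\bm{H})=\mathrm{Null}(\bm{H}\bm{T})$ becomes, after using that $\bm{A}^{-1}$ is a bijection,
\begin{equation*}
\mathrm{Null}(\tilde{\bm{H}})=\mathrm{Null}(\tilde{\bm{H}}\bm{Z}).
\end{equation*}
So the theorem reduces to the statement that the stabilizer of $\mathrm{Null}(\tilde{\bm{H}})$ in $\mathrm{GL}_n(\mathbb{F})$ is exactly $\mathcal{Z}$, and then conjugation by~$\bm{A}$ transfers this to $\mathcal{T}$.

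To verify the reduced statement, I would block-decompose $\bm{Z}=\bigl[\begin{smallmatrix}\bm{C}&\bm{P}\\ \bm{D}&\bm{E}\end{smallmatrix}\bigr]$ conformally with $\tilde{\bm{H}}$. Then $\tilde{\bm{H}}\bm{Z}=[\bm{C}\ \bm{P}]$, and $\mathrm{Null}(\tilde{\bm{H}})=\{(\bm{0};\bm{v})^\mathsf{T}:\bm{v}\in\mathbb{F}^k\}$ lies in $\mathrm{Null}(\tilde{\bm{H}}\bm{Z})$ if and only if $\bm{P}\bm{v}=\bm{0}$ for every $\bm{v}\in\mathbb{F}^k$, i.e.\ $\bm{P}=\bm{0}$. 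Conversely, if $\bm{P}=\bm{0}$ then block-triangularity forces $\bm{C}$ and $\bm{E}$ to be invertible (since $\bm{Z}\in\mathrm{GL}_n(\mathbb{F})$), and invertibility of $\bm{C}$ yields $\mathrm{Null}([\bm{C}\ \bm{0}])=\mathrm{Null}(\tilde{\bm{H}})$; dimension count then upgrades the inclusion to equality, so $\bm{Z}\in\mathcal{Z}$. This is the heart of the proof and is essentially a one-line block-matrix calculation.

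Finally, the subgroup property of $\mathcal{T}$ is immediate once the conjugation description is established: $\mathcal{Z}$ is visibly a subgroup of $\mathrm{GL}_n(\mathbb{F})$ (lower block-triangular invertible matrices are closed under products and inverses), and conjugation by the fixed element $\bm{A}\in\mathrm{GL}_n(\mathbb{F})$ is an inner automorphism of $\mathrm{GL}_n(\mathbb{F})$, so $\mathcal{T}=\bm{A}\,\mathcal{Z}\,\bm{A}^{-1}$ inherits the subgroup structure. The main conceptual obstacle is really the setup step—recognizing that bringing $\bm{H}$ to the canonical form $[\bm{I}\ \bm{0}]$ via a right multiplication turns the null-space-preservation condition into the transparent block condition $\bm{P}=\bm{0}$; after that the argument is routine linear algebra.
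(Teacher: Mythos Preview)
Your proof is correct and follows essentially the same strategy as the paper: reduce to the canonical PCM $\tilde{\bm{H}}=[\bm{I}\ \bm{0}]$ via the CCM $\bm{A}$, identify the stabilizer of $\mathrm{Null}(\tilde{\bm{H}})$ there, and conjugate back. Your block-matrix computation makes the ``only if'' direction and the subgroup claim more explicit than the paper's argument, which phrases the reduction in terms of the canonical basis vectors $\bm{e}_{n-k+1},\dots,\bm{e}_n$ and the change-of-basis interpretation of $\bm{A}$, but the underlying idea is the same.
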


\begin{proof}
First, we consider the case ${\bm{H}=\tilde{\bm{H}}}$.
The null space of $\tilde{\bm{H}}$ is the linear span of the $k$ canonical vectors ${\bm{e}_{n-k+1},\dots ,\bm{e}_{n}\in \mathbb{F}^n}$ where the $i^{\text{th}}$ entry of $\bm{e}_i$ is $1$ and all other entries are $0$.
Then, the subspace $\mathrm{Null}(\tilde{\bm{H}})$ is mapped on itself by all matrices $\bm{Z}\in \mathcal{Z}$, i.e.,
 \begin{equation}
    \mathrm{span}\{\bm{Z} \bm{e}_{n-k+1},\dots ,\bm{Z} \bm{e}_{n}\}=\mathrm{span}\{\bm{e}_{n-k+1},\dots, \bm{e}_{n}\},
 \end{equation}
 proving the theorem in the case of $\bm{H}=\tilde{\bm{H}}$.
 
An arbitrary matrix $\bm{H}\in\mathbb{F}^{(n-k)\times n}$ of rank $n-k$ can be transformed into $\tilde{\bm{H}}$ by applying Gaussian elimination on the columns which can be represented by multiplication of $\bm{H}$ from the right with a non-singular matrix $\bm{A}$ as in (\ref{eq:htohtilde}).

Consider again the basis $\{\bm{e}_{n-k+1},\dots, \bm{e}_{n}\}$ of $\mathrm{Null}(\tilde{\bm{H}})$.
 Then, using (\ref{eq:htohtilde}) and arbitrary $i \in \{1,\dots,k\}$, it follows that
$%
     \bm{0}=\tilde{\bm{H}} \bm{e}_{n-k+i} = \bm{H} \bm{A}  \bm{e}_{n-k+i}.
$%
 Thus, $\bm{A}  \bm{e}_{n-k+i}$ is an element of $\mathrm{Null}(\bm{H})$.
 In addition, since $\bm{A}$ is non-singular and the vectors in $\{\bm{e}_{n-k+1}, \ldots, \bm{e}_n\}$ are linearly independent, it follows that the vectors in $\{\bm{A}  \bm{e}_{n-k+1}, \ldots,\bm{A}  \bm{e}_{n}\}$
are also linearly independent. 
Therefore, $\{\bm{A}  \bm{e}_{n-k+1},\ldots, \bm{A}  \bm{e}_{n}\}$ is a basis of $\mathrm{Null}(\bm{H})$. 

The CCM $\bm{A}$ is a non\--unique change\--of\--basis matrix
mapping the basis $\{\bm{e}_{n-k+i}\}_{i=1}^{k}$ of $\mathrm{Null}(\tilde{\bm{H}})$ to the basis $\{\bm{A}  \bm{e}_{n-k+i}\}_{i=1}^{k}$ of $\mathrm{Null}(\bm{H})$.
Hence, $\bm{A}^{-1}$ is a change\--of\--basis matrix from the set $\{\bm{A}  \bm{e}_{n-k+i}\}_{i=1}^{k}$ to the set $\{\bm{e}_{n-k+i}\}_{i=1}^{k}$.
 Thus, conjugation of arbitrary $\bm{Z}\in \mathcal{Z}$ with $\bm{A}$ leads to all matrices that fulfill (\ref{equation:main_theorem}).
\end{proof}

 Theorem \ref{theorem:core} provides some structural insights into the automorphism group of a code and states an explicit construction method of automorphisms.
 Another relevant property of its CCM for the code design is highlighted in Theorem \ref{lemma:codeinfo_in_a}  and proven in the appendix.%

  \begin{theorem}\label{lemma:codeinfo_in_a} All characteristics of a linear code with PCM~${\bm{H}\in \mathbb{F}^{(n-k)\times n}}$, except for its code rate, are contained within the non-singular CCM $\bm{A}\in \mathrm{GL}_n(\mathbb{F})$.
    In addition, the inverse of $\bm{A}$ is of the form:
    \begin{equation} \vspace{-1mm}\label{equation:ainv_code_info}
\bm{A}^{-1}=\begin{pmatrix}
    \bm{H}\\
    \bm{\Lambda}
\end{pmatrix},\vspace{-1mm}
    \end{equation}
    where $\bm{\Lambda}\in \mathbb{F}^{k\times n}$ must be chosen such that $\bm{A}^{-1}$ is of full rank.
    Furthermore, the CCM $\bm{A}$ is not unique.
 \end{theorem}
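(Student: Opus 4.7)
The plan is to read off the structure of $\bm{A}^{-1}$ directly from the defining identity of the CCM established in Theorem \ref{theorem:core}. Starting from $\bm{H}\bm{A}=\tilde{\bm{H}}$ with $\tilde{\bm{H}}=[\bm{I}_{(n-k)\times(n-k)}\ \bm{0}_{(n-k)\times k}]$ and right-multiplying by $\bm{A}^{-1}$, I obtain $\bm{H}=\tilde{\bm{H}}\bm{A}^{-1}$. Since $\tilde{\bm{H}}$ acts as the selector of the first $n-k$ rows of its right operand, this identity forces the top $n-k$ rows of $\bm{A}^{-1}$ to coincide with $\bm{H}$. Writing $\bm{A}^{-1}$ in block form with an unspecified lower block $\bm{\Lambda}\in\mathbb{F}^{k\times n}$ then immediately yields (\ref{equation:ainv_code_info}).

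Next, I would analyze the admissible choices of $\bm{\Lambda}$. Because $\bm{A}^{-1}\in\mathrm{GL}_n(\mathbb{F})$, its rows must form a basis of $\mathbb{F}^n$, so the rows of $\bm{\Lambda}$ have to complement those of $\bm{H}$ to a basis; equivalently, the row span of $\bm{\Lambda}$ has to be a $k$-dimensional complement of the row span of $\bm{H}$. The non-uniqueness of the CCM follows at once, since such a complement is itself non-unique whenever $k\geq 1$ (any choice of $\bm{\Lambda}$ that keeps $\bm{A}^{-1}$ full rank works, and there are many such completions, in particular over any nontrivial $\mathbb{F}$).

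To establish the first assertion, I would show that the code is recoverable from $\bm{A}$ once the rate $k/n$ is known: one computes $\bm{A}^{-1}$, reads off its first $n-k$ rows as a PCM, and forms $\mathcal{C}=\mathrm{Null}(\bm{H})$. Non-singularity of $\bm{A}^{-1}$ guarantees that the extracted $\bm{H}$ has full row rank, so this reconstruction is well-defined, and every code-theoretic invariant (minimum distance, weight enumerator, generator matrix up to row operations, generalized automorphism group) is a function of $\mathrm{Null}(\bm{H})$ and thus of $\bm{A}$ together with the rate.

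The main obstacle is conceptual rather than technical: one must pin down the informal phrase \emph{``all characteristics of a linear code $\ldots$ except for its code rate.''} I would make it precise by noting that $\bm{A}$ alone does not determine $k$, since splitting $\bm{A}^{-1}$ at different row indices produces valid PCMs of codes of different dimensions, whereas once the rate is fixed the split point $n-k$ is uniquely determined and the procedure above specifies the code unambiguously. This is exactly the statement that the rate is the single piece of information $\bm{A}$ fails to encode.
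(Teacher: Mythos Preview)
Your proposal is correct and follows essentially the same route as the paper: both derive (\ref{equation:ainv_code_info}) by right-multiplying $\bm{H}\bm{A}=\tilde{\bm{H}}$ with $\bm{A}^{-1}$ and reading off that $\tilde{\bm{H}}$ selects the top $n-k$ rows, and both argue that knowing the rate lets one extract $\bm{H}$ from $\bm{A}^{-1}$ and hence recover the code. The only cosmetic difference is the non-uniqueness argument---the paper exhibits a second CCM as $\bm{A}\bm{Z}$ for suitable $\bm{Z}\in\mathcal{Z}$, while you obtain it from the freedom in the complement $\bm{\Lambda}$---but these are two views of the same degree of freedom.
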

 
\subsection{Generalized Automorphism Ensemble Decoding}
\label{sec:preprocessing}
The statements in Sec. \ref{sec:main-sec} hold for arbitrary fields. 
In this section, we confine ourselves to binary codes $\mathcal{C}\subset\mathbb{F}_2^n$
to adapt AED for automorphisms of ${\text{\Aut{}}\setminus \text{\PermAut{}}}$. 
We require a preprocessing of the bit\--wise LLRs, as depicted in Fig. \ref{figure:aut_preprocessing}, for using generalized automorphisms in GAED.
In order to describe the effects of $\mathbb{F}_2$-sums on the LLRs,
let $(X_i)_{i=1}^{s}\in \mathbb{F}_2^s$ be a sequence of
binary
random variables. Accordingly, let ${(L_i)_{i=1}^{s} \in \mathbb{R}^s}$ be their corresponding LLRs. 
 Then, the LLR of the $\mathbb{F}_2$-sum is given by the box-plus operator \cite{485714}: %
  \begin{equation}\label{eq:boxplus}
      L\left(\sum\limits_{i=1}^{s} X_i\right)=2\cdot \tanh^{-1}\left(\prod_{
 i=1}^{s} \tanh \left(\frac{L_i}{2}\right) \right) =: \mathboxplus_{i=1}^{s} L_i.
  \end{equation}

 \begin{figure}[t]
    \centering
    \begin{tikzpicture}	[level 1/.style={sibling distance=20mm},edge from parent/.style={->,draw},>=latex]
		\node (in) at (0,0) {$\bm{y}$};
		\node (prepro) [myblock,right=0.7cm of in] {Preprocessing$(\bm{T}$)};
		\node (dec) [myblock, right=0.7cm of prepro] {Dec.};
		\node (inv) [myblock, right=0.7cm of dec] {$\bm{T}^{-1}$};
		\node (out) [right=1cm of inv] {$\hat{\bm{x}}$};
		\draw[->] (in) -- (prepro);
		\draw[->](prepro) -- (dec);
		\draw[->] (dec) -- (inv);
		\draw[->] (inv) -- (out);
	\end{tikzpicture}%
    \vspace*{-2mm}%
    \caption{Path of a GAED if an automorphism  $\bm{T}\in \mathrm{Aut}(\mathcal{C}$) is used.} \label{figure:aut_preprocessing}
    \vspace*{-2mm}%
 \end{figure}
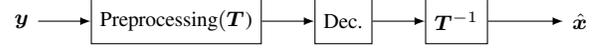

Let $\bm{x}\in\mathcal{C}$ be an arbitrary codeword 
that is transmitted over a binary memoryless channel.
     Consider an automorphism $\bm{T}\in \mathrm{GAut}(\mathcal{C})$, and define 
     \begin{equation}\label{equation:aut_sum_according_row}
         \tilde{\bm{x}}:=\bm{T} \bm{x}=\begin{pmatrix}
            \sum\limits_{i=1}^{n} T_{1,i} x_i, & \dots &, \sum\limits_{i=1}^{n} T_{n,i} x_i
         \end{pmatrix}^{\top}.
     \end{equation} 
     Then, to mimic the effect of this automorphism at the receiver the bit\--wise LLRs $L( y_i|x_i)$ are processed according to
      \begin{equation}\label{equation:preprocessing}
        L( y_j|\tilde{x}_j)= \mathboxplus_{
         \begin{subarray}{l}
 \,\,\,\, i=1,\\
 T_{j,i}=1 \end{subarray}}^{n}  L(y_i|x_i),\quad \forall j \in \{1,\dots,n\},
      \end{equation}
which follows immediately from (\ref{eq:boxplus}).
Note that permuting the bit-wise LLR vector $\pi_i(\bm{L})$ with $\pi_i\in\mathrm{Aut}(\mathcal{C})$ in Sec.~\ref{sec:aed} is a special case of (\ref{equation:preprocessing}). Hence, the proposed GAED algorithm naturally generalizes AED proposed in \cite{AED_RMcodes}.

According to (\ref{equation:preprocessing}), the numbers of non\--zero entries per row 
of $\bm{T}$ indicate the number of LLR values that participate in the boxplus summations. To quantify those effects, we conveniently define the \emph{weight} $\Omega(\bm{T})$ as the number of non-zero elements of $\bm{T}$ and ${\Delta(\bm{T})=\Omega(\bm{T})-n}$ as the \emph{weight over permutation}. Note that, due to $\bm{T}\in\mathrm{GL}_n(\mathbb{F})$, $\Omega(\bm{T})$ is lower bounded by $n$ which is only obtained if $\bm{T}\in\mathrm{Aut}(\mathcal{C})$.

Since the magnitude of an LLR indicates the reliability of the message, it is important to understand the influence of the weight $\Omega(\bm{T})$ of an automorphism 
on the magnitude of the LLR after the proposed preprocessing. It can be shown that 
the magnitude of the outgoing LLR is decreasing if more finite LLRs participate in the boxplus summation.
Thus, the preprocessing with an automorphism $\bm{T}\in \mathrm{GAut}(\mathcal{C})\setminus \mathrm{Aut}(\mathcal{C})$ leads to an information loss.
Hence, we expect the paths of GAED with such automorphisms to individually possess
a higher decoding error probability.
Still, as long as the performance degradation per path is not too severe, the decoding performance of the full GAED algorithm can improve.
Note that the complexity of GAED is comparable to AED because the preprocessing step can be interpreted as one check node update.

\vspace*{-1mm}

\section{Construction of Linear Codes with Sparse
Automorphisms}
\label{sec:code-construction}
Next, we propose a method to construct linear codes $\mathcal{C}$ with a specific, potentially sparse, automorphism ${\bm{T}\in \text{\Aut{}}}$. 
We observe that if $\bm{T}$ is sufficiently sparse, then often %
$\bm{T}^{-1}$ and $\bm{T}^2$ are also sparse automorphisms usable in GAED.
A possible approach is to find a non-singular CCM ${\bm{A}}$
such that ${\bm{A}^{-1}  \bm{T}  \bm{A} \in  \mathcal{Z}}$.
The following construction method
designs a code $\mathcal{C}$ along with an automorphism of weight $\Omega_{\text{obj}}$ and is based on the observation that the Frobenius normal form
is, in some cases, an element of $\mathcal{Z}$:
 \begin{enumerate}
            \item Choose $\Omega_{\text{obj}}$ close to $n$, i.e., $\Delta(\bm{T})$ close to zero.
     \item Sample a matrix $\bm{T}\in \mathrm{GL}_n(\mathbb{F})$ with $\Omega(\bm{T})=\Omega_{\text{obj}}$
     \begin{itemize}
         \item Determine the sizes $d_i$ of the Frobenius normal form $\bm{F}$ by solving a set of linear equations\cite{bhattacharya1994basic}.
         \item Evaluate if the matrices $\bm{B}_{f_i}$ can be ordered such that $\bm{F}$ is an element of $\mathcal{Z}$, e.g., by using Theorem \ref{theorem:frob_in:z}.
         \item Otherwise, repeat step 2).
     \end{itemize}
     \item Calculate $\bm{S}_{\text{F}}=\bm{A}^{-1}$, such that 
     $\bm{F}=\bm{A}^{-1} \bm{T} \bm{A}\in \mathcal{Z}$.
     \item Extract the PCM, denoted $\bm{H}_{\mathrm{c}}$, according to Theorem~\ref{lemma:codeinfo_in_a}.%
     \item Find an optimized PCM $\bm{H}_{\mathrm{opt}}$ based on $\bm{H}_{\mathrm{c}}$.
 \end{enumerate}
 Theorem \ref{theorem:frob_in:z} states a sufficient condition for the existence of a CCM for a given $\bm{T}$. A proof is given in the appendix. %
\begin{theorem}\label{theorem:frob_in:z}\vspace{-1mm}
    Let $d_1,\dots d_{j}$ be the sizes of the block matrices of the Frobenius normal form $\bm{F}$ of $\bm{T}$. If there exists a subset $\mathcal{J}\subseteq \{1,\dots,j\}$ with
     ${\sum_{i\in \mathcal{J}} d_i=k}$,
     then there exists an ordering of the $\bm{B}_{f_i}$ yielding an upper-right all-zero block of size ${(n-k)\times k}$ or ${k\times (n-k)}$  within $\bm{F}$. Hence,
     $\mathcal{C}(n,k)$ and $\mathcal{C}(n,n-k)$ can be constructed.\vspace{-1mm}
\end{theorem}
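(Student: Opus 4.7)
The plan is to exploit the block-diagonal structure of the Frobenius normal form together with the freedom to permute its companion blocks. Since $\bm{F}$ is block diagonal with blocks $\bm{B}_{f_1}, \ldots, \bm{B}_{f_j}$ on its diagonal, any off-diagonal block (in a block decomposition consistent with the $d_i$) is automatically zero. The entire proof therefore reduces to showing that the $\bm{B}_{f_i}$ can be re-grouped so that a block decomposition matching the $(n-k,k)$ partition falls exactly on companion-block boundaries.

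First I would recall from Section \ref{sec:frobenius} that the Frobenius normal form of $\bm{T}$ is unique only up to the ordering of the blocks $\bm{B}_{f_i}$. This means that for any permutation $\sigma \in \mathrm{S}_j$, the matrix obtained by rearranging the diagonal blocks in the order $\bm{B}_{f_{\sigma(1)}}, \ldots, \bm{B}_{f_{\sigma(j)}}$ is again a Frobenius normal form of $\bm{T}$, conjugate to the original via a permutation-of-basis transformation that can be absorbed into $\bm{S}_{\mathrm{F}}$.

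Next, given the hypothesis $\sum_{i\in \mathcal{J}} d_i = k$, I would choose an ordering $\sigma$ that lists all indices in $\{1,\dots,j\}\setminus\mathcal{J}$ first and all indices in $\mathcal{J}$ afterward. Because $\sum_{i\notin\mathcal{J}} d_i = n-k$, the resulting Frobenius form has the block structure
\begin{equation*}
\bm{F} = \begin{bmatrix} \bm{C} & \bm{0}_{(n-k)\times k} \\ \bm{0}_{k\times (n-k)} & \bm{E} \end{bmatrix},
\end{equation*}
where $\bm{C}$ collects the companion blocks indexed by $\{1,\dots,j\}\setminus\mathcal{J}$ and $\bm{E}$ those indexed by $\mathcal{J}$. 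This matches the defining shape of $\mathcal{Z}(n,k)$ (with $\bm{D}=\bm{0}$), and non-singularity of $\bm{F}$ is inherited from $\bm{T}\in\mathrm{GL}_n(\mathbb{F})$, so $\bm{F}\in\mathcal{Z}(n,k)$. Swapping the two groups in the ordering instead yields a matrix with an upper-right zero block of size $k\times(n-k)$, which lies in $\mathcal{Z}(n,n-k)$, so the code $\mathcal{C}(n,n-k)$ can be constructed analogously via Theorem \ref{theorem:core}.

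Honestly, I would not expect a real obstacle here: the only subtlety is to state cleanly that reordering the Frobenius blocks yields another valid Frobenius normal form of the \emph{same} $\bm{T}$, so that the $\bm{S}_{\mathrm{F}}$ produced by step~3 of the construction still conjugates $\bm{T}$ into the chosen $\bm{F}\in\mathcal{Z}$. Once this is made explicit, the conclusion is a direct consequence of Theorem \ref{theorem:core}, which guarantees that every $\bm{Z}\in\mathcal{Z}$ together with a CCM $\bm{A}=\bm{S}_{\mathrm{F}}^{-1}$ defines a code with $\bm{T}$ as an automorphism.
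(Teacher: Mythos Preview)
Your proposal is correct and follows essentially the same route as the paper: both arguments reorder the companion blocks so that those indexed by $\{1,\dots,j\}\setminus\mathcal{J}$ come first, producing a block-diagonal $\bm{F}$ with an $(n-k)\times k$ upper-right zero block, hence $\bm{F}\in\mathcal{Z}(n,k)$, and then swap the groups for $\mathcal{Z}(n,n-k)$. The only cosmetic difference is that the paper invokes Theorem~\ref{lemma:codeinfo_in_a} (extracting $\bm{H}$ from $\bm{A}^{-1}$) for the final ``can be constructed'' step, whereas you cite Theorem~\ref{theorem:core}; both are valid since the construction algorithm uses them in tandem.
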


 Note that the method only yields a PCM. A generator matrix must still be determined, which may contain zero columns.
 Then, a suitable reduction of $\bm{G}$, $\bm{H}$ and $\bm{T}$ can be performed resulting in a code with smaller block length.

 It is not guaranteed that a code constructed with the proposed method has good properties as, e.g., large minimum Hamming distance.
 In addition, structural properties of the resulting PCM
 are not yet considered in the first four steps, but is subject to ongoing research. 
 Thus, we currently perform a heuristic optimization in which we randomly choose low-weight dual codewords to construct a full rank PCM.
 Note that the full dual codebook can be determined for all of the upcoming codes.

\vspace*{-1mm}

\vspace*{-1.5mm}%
\section{Results}%
\vspace*{-1mm}%
\label{sec:results}

\begin{table}
        \centering
        \caption{Code parameters}\label{table:constructed_codes}
        \vspace*{-2mm}
        \begin{tabular}{cccccccc}
                \toprule
                \textbf{Code} &     $\mathcal{C}_1$ & $\mathcal{C}_{1,\mathrm{ref}}$ &$\mathcal{C}_2$ & $\mathcal{C}_{2,\mathrm{ref}}$& $\mathcal{C}_3$  &   $\mathcal{C}_{\mathrm{BCH}}$  \\
                \midrule
                \vspace*{-2mm}\\
                                    $n$& $39$ & $39$ & $32$& $32$  & $63$    & $63$\\
                                    $k$& $24$ & $24$ & $16$& $16$  & $45$    & $45$\\
                        $d_\text{min}$ & $6$ &  $6$  & $5$ & $8$  &  $5$     & $7$
                \\
                \bottomrule
                \vspace*{-2mm}\\
        \end{tabular}
        \vspace*{-4mm}
\end{table}

We analyze the performance of three constructed binary codes $\mathcal{C}_1$\--$\mathcal{C}_3$, 
 two reference codes $\mathcal{C}_{1,\text{ref}}, \mathcal{C}_{2,\text{ref}}$ from \cite{Grassl:codetables} and a BCH code $\mathcal{C}_{\mathrm{BCH}}$, with parameters outlined in Table \ref{table:constructed_codes}. The minimum Hamming distances were obtained using the methods proposed in \cite{search_dmin}.
To evaluate the frame error rate (FER), we perform Monte\--Carlo simulations using an AWGN channel, accumulating at least 300 frame errors for each SNR.
The notation {GAED-$\ell$\--BP--$p$} denotes GAED consisting of $\ell$ BP path decoders performing $p$ iterations of normalized min-sum decoding (normalization constant $\frac{3}{4}$).
All GAEDs rely on three different automorphisms, namely the identity mapping $\bm{I}$, an automorphism $\bm{T}$ constructed according to Sec. \ref{sec:code-construction}, and its inverse $\bm{T}^{-1}$.
Additionally, as reference, we show results of a redundant row BP decoder, named {R-$\ell$\--BP--$p$} decoder, which performs BP decoding with $p$ iterations
using an overcomplete PCM consisting of $\ell\cdot (n-k)$ low-weight dual codewords. This approach is known to potentially improve BP decoding of short block codes \cite{MBBP2}.
Ordered statistics decoders are used to approximate the ML performances of all codes \cite{FosLin}. 

Fig. \ref{figure:39_24_perm}-\ref{figure:63_45} depict the FER over $E_{\mathrm{b}}/N_0$ for codes $\mathcal{C}_1$-$\mathcal{C}_3$ based on automorphisms with different weights over permutation.
Code $\mathcal{C}_{1}$ was constructed based on a permutation ${(\Delta(\bm{T})=0)}$, hence GAED equals AED. Codes $\mathcal{C}_2$ and $\mathcal{C}_3$ were designed to have automorphisms with ${\Delta(\bm{T})=10}$ and ${\Delta(\bm{T})=5}$, respectively, to show validity of the general approach.
For all constructed codes, the performance of GAED-3\--BP\--10 is compared against two decoders with comparable complexity, namely the BP\--30 and the {R-$3$\--BP-$10$}.
Additionally, Fig. \ref{figure:39_24_perm}-\ref{figure:63_45}
depict the performance of the ML decoder for the constructed codes and some reference codes. Both are intended as an indication
of the best achievable performance. Since the paper at hand is intended as a proof of concept, a natural gap to this performance is still observed.
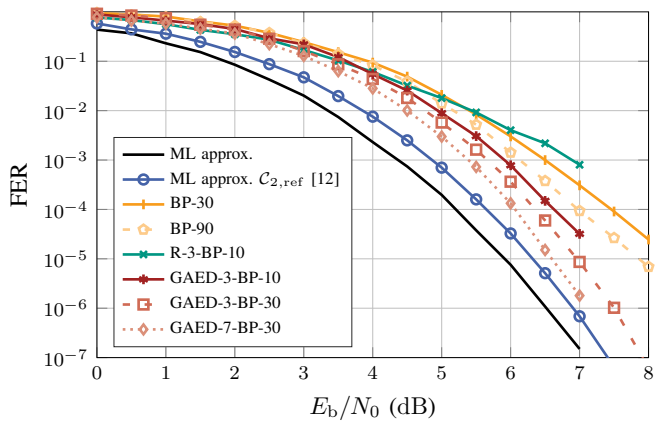
\begin{figure}[t]
                \centering
        \definecolor{mycolor1}{rgb}{0.46600,0.67400,0.18800}%
\definecolor{mycolor2}{rgb}{0.49400,0.18400,0.55600}%
\definecolor{mycolor3}{rgb}{1.00000,0.44700,0.74100}%
\definecolor{mycolor4}{rgb}{0.00000,0.44700,0.74100}%
\definecolor{kit-green100}{rgb}{0,.59,.51}

\definecolor{kit-green100}{rgb}{0,.59,.51}
\definecolor{kit-green70}{rgb}{.3,.71,.65}
\definecolor{kit-green50}{rgb}{.50,.79,.75}
\definecolor{kit-green30}{rgb}{.69,.87,.85}
\definecolor{kit-green15}{rgb}{.85,.93,.93}
\definecolor{KITgreen}{rgb}{0,.59,.51}

\definecolor{KITpalegreen}{RGB}{130,190,60}
\colorlet{kit-maigreen100}{KITpalegreen}
\colorlet{kit-maigreen70}{KITpalegreen!70}
\colorlet{kit-maigreen50}{KITpalegreen!50}
\colorlet{kit-maigreen30}{KITpalegreen!30}
\colorlet{kit-maigreen15}{KITpalegreen!15}

\definecolor{KITblue}{rgb}{.27,.39,.66}
\definecolor{kit-blue100}{rgb}{.27,.39,.67}
\definecolor{kit-blue70}{rgb}{.49,.57,.76}
\definecolor{kit-blue50}{rgb}{.64,.69,.83}
\definecolor{kit-blue30}{rgb}{.78,.82,.9}
\definecolor{kit-blue15}{rgb}{.89,.91,.95}

\definecolor{KITyellow}{rgb}{.98,.89,0}
\definecolor{kit-yellow100}{cmyk}{0,.05,1,0}
\definecolor{kit-yellow70}{cmyk}{0,.035,.7,0}
\definecolor{kit-yellow50}{cmyk}{0,.025,.5,0}
\definecolor{kit-yellow30}{cmyk}{0,.015,.3,0}
\definecolor{kit-yellow15}{cmyk}{0,.0075,.15,0}

\definecolor{KITorange}{rgb}{.87,.60,.10}
\definecolor{kit-orange100}{cmyk}{0,.45,1,0}
\definecolor{kit-orange70}{cmyk}{0,.315,.7,0}
\definecolor{kit-orange50}{cmyk}{0,.225,.5,0}
\definecolor{kit-orange30}{cmyk}{0,.135,.3,0}
\definecolor{kit-orange15}{cmyk}{0,.0675,.15,0}

\definecolor{KITred}{rgb}{.63,.13,.13}
\definecolor{kit-red100}{cmyk}{.25,1,1,0}
\definecolor{kit-red70}{cmyk}{.175,.7,.7,0}
\definecolor{kit-red50}{cmyk}{.125,.5,.5,0}
\definecolor{kit-red30}{cmyk}{.075,.3,.3,0}
\definecolor{kit-red15}{cmyk}{.0375,.15,.15,0}

\definecolor{KITpurple}{RGB}{160,0,120}
\colorlet{kit-purple100}{KITpurple}
\colorlet{kit-purple70}{KITpurple!70}
\colorlet{kit-purple50}{KITpurple!50}
\colorlet{kit-purple30}{KITpurple!30}
\colorlet{kit-purple15}{KITpurple!15}

\definecolor{KITcyanblue}{RGB}{80,170,230}
\colorlet{kit-cyanblue100}{KITcyanblue}
\colorlet{kit-cyanblue70}{KITcyanblue!70}
\colorlet{kit-cyanblue50}{KITcyanblue!50}
\colorlet{kit-cyanblue30}{KITcyanblue!30}
\colorlet{kit-cyanblue15}{KITcyanblue!15}
\definecolor{mycolor1}{rgb}{0.46600,0.67400,0.18800}%
\definecolor{mycolor2}{rgb}{0.49400,0.18400,0.55600}%
\definecolor{mycolor3}{rgb}{1.00000,0.44700,0.74100}%
\definecolor{mycolor4}{rgb}{0.00000,0.44700,0.74100}%
\definecolor{kit-green100}{rgb}{0,.59,.51}
\definecolor{mycolor1}{rgb}{0.46600,0.67400,0.18800}%
\definecolor{mycolor2}{rgb}{0.49400,0.18400,0.55600}%
\definecolor{mycolor3}{rgb}{1.00000,0.44700,0.74100}%
\definecolor{mycolor4}{rgb}{0.00000,0.44700,0.74100}%
\definecolor{kit-green100}{rgb}{0,.59,.51}

\begin{tikzpicture}[scale=0.92,spy using outlines={rectangle, magnification=1.3}]

\begin{axis}[
width=.9\columnwidth,
height=5cm,
at={(0.758in,0.645in)},
scale only axis,
xmin=0,
xmax=8,
xlabel style={font=\color{white!15!black}},
xlabel={$E_{\mathrm{b}}/N_0$ ($\si{dB}$)},
ymode=log,
ymin=1e-07,
ymax=1,
ytick={1e-1,1e-2,1e-3,1e-4,1e-5,1e-6,1e-7},
ylabel style={font=\color{white!15!black}},
ylabel={FER},
axis background/.style={fill=white},
xmajorgrids,
ymajorgrids,
legend style={at={(0.03,0.03)}, anchor=south west, legend cell align=left, align=left, draw=white!15!black, font=\scriptsize}
]
\addplot [color=black, line width=1.1pt]
  table[row sep=crcr]{%
0	0.43731778425656\\
0.5	0.367647058823529\\
1	0.230769230769231\\
1.5	0.153374233128834\\
2	0.0850099178237461\\
2.5	0.0424989375265618\\
3	0.0201369311316955\\
3.5	0.0074279488957116\\
4	0.00230671638921995\\
4.5	0.000740528638710888\\
5	0.000196864473902661\\
5.5	3.77576630121025e-05\\
6	7.58388439733335e-06\\
6.5	1.08e-06\\
7	1.5e-07\\
};
\addlegendentry{ML approx.}

\addplot [color=KITblue, line width=1.1pt,mark=o, mark options={solid, KITblue}]
  table[row sep=crcr]{%
0	5.81818182e-01\\
0.5	4.41860465e-01\\
1	3.61474435e-01\\
1.5	2.47254982e-01\\
2	1.54080081e-01\\
2.5	8.73814314e-02\\
3	4.72673560e-02\\
3.5	1.97057108e-02\\
4	7.55213832e-03\\
4.5	2.50137823e-03\\
5	7.01123181e-04\\
5.5	1.59767160e-04\\
6	3.26202028e-05\\
6.5	5.11925067e-06\\
7	6.86723648e-07\\
7.5 6.5e-08\\
};
\addlegendentry{ML approx. $\mathcal{C}_{\mathrm{2,ref}}$ \cite{Grassl:codetables}}

\addplot [color=kit-orange100, line width=1.1pt, mark=|, mark options={solid, kit-orange100}]
  table[row sep=crcr]{%
0	0.945121951219512\\
0.5	0.882022471910112\\
1	0.802992518703242\\
1.5	0.641025641025641\\
2	0.525547445255474\\
2.5	0.376190476190476\\
3	0.243562231759657\\
3.5	0.152488425925926\\
4	0.0938369199238765\\
4.5	0.0488538595752361\\
5	0.0208919244676577\\
5.5	0.00836220736328304\\
6	0.00300809769698972\\
6.5	0.000998153684593395\\
7	0.000307386603368223\\
7.5	9.076e-05\\
8	2.457e-05\\
};
\addlegendentry{BP\--30}

\addplot [color=kit-orange50,loosely dashed,mark=pentagon, line width=1.1pt, mark options={solid, kit-orange50}]
  table[row sep=crcr]{%
0	   0.952380952380952\\     
0.5	0.845070422535211\\
1	  0.783289817232376\\
1.5	0.623700623700624\\
2	  0.519930675909879\\
2.5	0.373599003735990\\
3	  0.231124807395994\\
3.5	0.139082058414465\\
4	  0.0825536598789213\\
4.5	0.0385505011565150\\
5	  0.0140056022408964\\
5.5	0.00520300386756621\\
6	  0.00141724025529221\\
6.5 0.000377728141501998\\
7 9.33328894835922e-05\\
7.5 2.64619779746373e-05\\
8 6.88819120186033e-06\\
};
\addlegendentry{ BP-90 }

\addplot [color=KITgreen, line width=1.1pt, mark=x,mark options={solid, KITgreen}]
  table[row sep=crcr]{%
0	7.712082262210796513e-01\\
0.5	6.818181818181817677e-01\\
1	5.628517823639774820e-01\\
1.5	4.322766570605187098e-01\\
2	3.525264394829611958e-01\\
2.5	2.724795640326975743e-01\\
3	1.690140845070422504e-01\\
3.5	1.035554021401449837e-01\\
4	6.087662337662337608e-02\\
4.5	3.212679374598415016e-02\\
5	1.794043774668101915e-02\\
5.5	9.202171712524156305e-03\\
6	4.022741900879639801e-03\\
6.5	2.180898239288154954e-03\\
7	8.020682667037400040e-04\\
};
\addlegendentry{R-3-BP\--10}

\addplot [color=KITred, line width=1.1pt, mark=asterisk, mark options={solid, KITred}]
  table[row sep=crcr]{%
0	8.869179600886918369e-01\\
0.5	7.766990291262135804e-01\\
1	6.655574043261230921e-01\\
1.5	5.649717514124293904e-01\\
2	4.494382022471909988e-01\\
2.5	2.915451895043731922e-01\\
3	2.194185408667032322e-01\\
3.5	1.213960546282245867e-01\\
4	5.501306560308073107e-02\\
4.5	2.551671344730798568e-02\\
5	8.889086424142758278e-03\\
5.5	3.050547573289405338e-03\\
6	7.825476229137769569e-04\\
6.5 1.489999999999999897e-04\\
7 3.199999999999999855e-05\\
};
\addlegendentry{GAED-3\--BP\--10}

\addplot [color=kit-red70, line width=1.1pt,loosely dashed, mark=square, mark options={solid, kit-red70}]
  table[row sep=crcr]{%
0	0.914634146341463\\
0.5	0.842696629213483\\
1	0.748129675810474\\
1.5	0.549450549450549\\
2	0.437956204379562\\
2.5	0.285714285714286\\
3	0.160944206008584\\
3.5	0.0868055555555556\\
4	0.043917435221783\\
4.5	0.0180494555080922\\
5	0.00573953968891695\\
5.5	0.00162163038719128\\
6	0.000364323499837269\\
6.5	5.97577540167668e-05\\
7	8.65552665763722e-06\\
7.5	1.02e-06\\
8	6e-08\\
};
\addlegendentry{GAED-3\--BP-30 }

\addplot [color=kit-red50,dotted,mark=diamond ,line width=1.1pt, mark options={solid, kit-red50}]
  table[row sep=crcr]{%
0   7.766990291262135804e-01\\
0.5 6.644518272425249394e-01\\
1	5.788712011577423766e-01\\
1.5	4.581901489117983783e-01\\
2	3.633060853769300436e-01\\
2.5	2.216066481994459769e-01\\
3	1.282873636946760820e-01\\
3.5	6.299212598425196763e-02\\
4	2.822267692090594735e-02\\
4.5	1.007835924311522094e-02\\
5	2.984584620435451040e-03\\
5.5 7.264340261697858252e-04\\
6   1.330000000000000073e-04\\
6.5 1.509999999999999942e-05\\
7    1.799999999999999919e-06\\
};
\addlegendentry{GAED-7-BP-30 }

\end{axis}

\end{tikzpicture}
        \vspace*{-2mm}%
        \caption{ Performance of different decoders for code $\mathcal{C}_1(39,24)$. The GAED-7 relies on $\bm{T}^\alpha$ with ${\Delta(\bm{T}^\alpha)=0}$ and $\alpha\in\{-3,\dots,3\}$ in its paths.
        } \label{figure:39_24_perm} 
\end{figure}

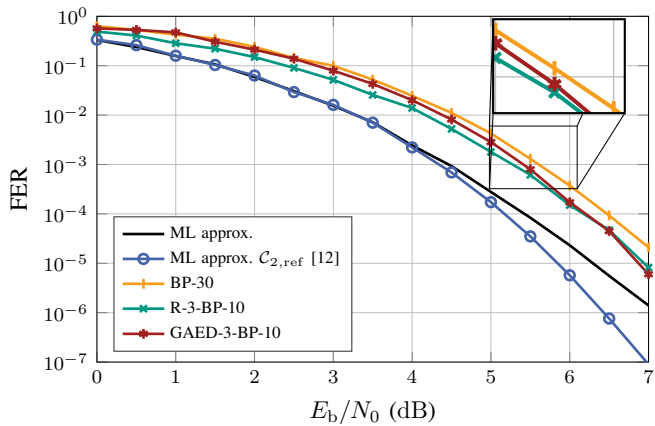
\begin{figure}[t]
                \centering
        \definecolor{mycolor1}{rgb}{0.46600,0.67400,0.18800}%
\definecolor{mycolor2}{rgb}{0.49400,0.18400,0.55600}%
\definecolor{mycolor3}{rgb}{1.00000,0.44700,0.74100}%
\definecolor{mycolor4}{rgb}{0.00000,0.44700,0.74100}%
\definecolor{kit-green100}{rgb}{0,.59,.51}

\definecolor{kit-green100}{rgb}{0,.59,.51}
\definecolor{kit-green70}{rgb}{.3,.71,.65}
\definecolor{kit-green50}{rgb}{.50,.79,.75}
\definecolor{kit-green30}{rgb}{.69,.87,.85}
\definecolor{kit-green15}{rgb}{.85,.93,.93}
\definecolor{KITgreen}{rgb}{0,.59,.51}

\definecolor{KITpalegreen}{RGB}{130,190,60}
\colorlet{kit-maigreen100}{KITpalegreen}
\colorlet{kit-maigreen70}{KITpalegreen!70}
\colorlet{kit-maigreen50}{KITpalegreen!50}
\colorlet{kit-maigreen30}{KITpalegreen!30}
\colorlet{kit-maigreen15}{KITpalegreen!15}

\definecolor{KITblue}{rgb}{.27,.39,.66}
\definecolor{kit-blue100}{rgb}{.27,.39,.67}
\definecolor{kit-blue70}{rgb}{.49,.57,.76}
\definecolor{kit-blue50}{rgb}{.64,.69,.83}
\definecolor{kit-blue30}{rgb}{.78,.82,.9}
\definecolor{kit-blue15}{rgb}{.89,.91,.95}

\definecolor{KITyellow}{rgb}{.98,.89,0}
\definecolor{kit-yellow100}{cmyk}{0,.05,1,0}
\definecolor{kit-yellow70}{cmyk}{0,.035,.7,0}
\definecolor{kit-yellow50}{cmyk}{0,.025,.5,0}
\definecolor{kit-yellow30}{cmyk}{0,.015,.3,0}
\definecolor{kit-yellow15}{cmyk}{0,.0075,.15,0}

\definecolor{KITorange}{rgb}{.87,.60,.10}
\definecolor{kit-orange100}{cmyk}{0,.45,1,0}
\definecolor{kit-orange70}{cmyk}{0,.315,.7,0}
\definecolor{kit-orange50}{cmyk}{0,.225,.5,0}
\definecolor{kit-orange30}{cmyk}{0,.135,.3,0}
\definecolor{kit-orange15}{cmyk}{0,.0675,.15,0}

\definecolor{KITred}{rgb}{.63,.13,.13}
\definecolor{kit-red100}{cmyk}{.25,1,1,0}
\definecolor{kit-red70}{cmyk}{.175,.7,.7,0}
\definecolor{kit-red50}{cmyk}{.125,.5,.5,0}
\definecolor{kit-red30}{cmyk}{.075,.3,.3,0}
\definecolor{kit-red15}{cmyk}{.0375,.15,.15,0}

\definecolor{KITpurple}{RGB}{160,0,120}
\colorlet{kit-purple100}{KITpurple}
\colorlet{kit-purple70}{KITpurple!70}
\colorlet{kit-purple50}{KITpurple!50}
\colorlet{kit-purple30}{KITpurple!30}
\colorlet{kit-purple15}{KITpurple!15}

\definecolor{KITcyanblue}{RGB}{80,170,230}
\colorlet{kit-cyanblue100}{KITcyanblue}
\colorlet{kit-cyanblue70}{KITcyanblue!70}
\colorlet{kit-cyanblue50}{KITcyanblue!50}
\colorlet{kit-cyanblue30}{KITcyanblue!30}
\colorlet{kit-cyanblue15}{KITcyanblue!15}
\definecolor{mycolor1}{rgb}{0.46600,0.67400,0.18800}%
\definecolor{mycolor2}{rgb}{0.49400,0.18400,0.55600}%
\definecolor{mycolor3}{rgb}{1.00000,0.44700,0.74100}%
\definecolor{mycolor4}{rgb}{0.00000,0.44700,0.74100}%
\definecolor{kit-green100}{rgb}{0,.59,.51}
\begin{tikzpicture}[scale=0.92,spy using outlines={rectangle, magnification=1.5}]

\begin{axis}[%
width=.9\columnwidth,
height=5cm,
at={(0.758in,0.645in)},
scale only axis,
xmin=0,
xmax=7,
xlabel style={font=\color{white!15!black}},
xlabel={$E_{\mathrm{b}}/N_0$ ($\si{dB}$)},
ymode=log,
ymin=1e-07,
ymax=1,
yminorticks=true,
ytick={1,1e-1,1e-2,1e-3,1e-4,1e-5,1e-6,1e-7},
ylabel style={font=\color{white!15!black}},
ylabel={FER},
axis background/.style={fill=white},
xmajorgrids,
ymajorgrids,
legend style={at={(0.03,0.03)}, anchor=south west, legend cell align=left, align=left, draw=white!15!black,font=\scriptsize}
]
\addplot [color=black, line width=1.1pt]
  table[row sep=crcr]{%
0	0.331858407079646\\
0.5	0.239043824701195\\
1	0.157397691500525\\
1.5	0.106496272630458\\
2	0.0591016548463357\\
2.5	0.03060599877576\\
3	0.0154966682163335\\
3.5	0.00715563506261181\\
4	0.00242977937603266\\
4.5	0.000928789694149554\\
5	0.00027793527443329\\
5.5	8.35736773003028e-05\\
6	2.31466103987999e-05\\
6.5	5.57120341522198e-06\\
7	1.4e-06\\
7.5	2.3e-07\\
8	0\\
};
\addlegendentry{ML approx.}

\addplot [color=KITblue,mark=o, line width=1.1pt, mark options={solid, KITblue}]
  table[row sep=crcr]{%
0	0.337078651685393\\
0.5	0.260642919200695\\
1	0.159235668789809\\
1.5	0.103305785123967\\
2	0.0632644453816955\\
2.5	0.0292625829106516\\
3	0.0161134386077989\\
3.5	0.00701787218115467\\
4	0.00221829501844882\\
4.5	0.00069072062883206\\
5	0.000172598691932047\\
5.5	3.50509711222059e-05\\
6	5.71614710954257e-06\\
6.5	7.6e-07\\
7	9e-08\\
};
\addlegendentry{ML approx. $\mathcal{C}_{2,\mathrm{ref}}$ \cite{Grassl:codetables}}

\addplot [color=kit-orange100, line width=1.1pt, mark=|, mark options={solid, kit-orange100}]
  table[row sep=crcr]{%
0	6.289308176100628645e-01\\
0.5	5.309734513274336765e-01\\
1	4.261363636363636465e-01\\
1.5	3.521126760563380031e-01\\
2	2.413515687851970903e-01\\
2.5	1.431297709923664008e-01\\
3	9.970089730807576989e-02\\
3.5	5.231037489102004917e-02\\
4	2.461033634126333011e-02\\
4.5	1.110165414646782298e-02\\
5	4.254594962559564604e-03\\
5.5 1.304932208771754408e-03\\
6   3.745033149785097345e-04\\
6.5 9.299999999999999727e-05\\
7   2.099999999999999884e-05\\
};
\addlegendentry{BP-30}

\addplot [color=KITgreen, line width=1.1pt,mark=x, mark options={solid, KITgreen}]
  table[row sep=crcr]{%
0	4.909983633387888791e-01\\
0.5	4.081632653061224580e-01\\
1	2.849002849002849058e-01\\
1.5	2.212389380530973559e-01\\
2	1.493280238924838133e-01\\
2.5	9.033423667570009141e-02\\
3	5.179558011049723715e-02\\
3.5	2.562788313685289768e-02\\
4	1.389725297632834512e-02\\
4.5	5.253940455341506166e-03\\
5	1.799618480882052935e-03\\
5.5 6.166837625109718212e-04\\
6   1.510000000000000112e-04\\
6.5 4.671822386033618139e-05\\
7   8.199999999999999417e-06\\
};
\addlegendentry{R-3-BP\--10}

\addplot [color=KITred, line width=1.1pt, mark=asterisk, mark options={solid, KITred}]
  table[row sep=crcr]{%
0	5.692599620493358215e-01\\
0.5	5.309734513274336765e-01\\
1	4.731861198738170349e-01\\
1.5	3.054989816700611205e-01\\
2	2.112676056338028130e-01\\
2.5	1.383125864453665366e-01\\
3	7.932310946589106460e-02\\
3.5	4.277159965782720286e-02\\
4	2.005481649842904071e-02\\
4.5	8.141333550435560576e-03\\
5	2.841258867095380891e-03\\
5.5 7.984966968853305768e-04\\
6   1.720000000000000066e-04\\
6.5 4.500000000000000283e-05\\
7   6.000000000000000152e-06\\
};
\addlegendentry{GAED-3\--BP\--10}

\coordinate (spypoint) at (axis cs:5,0.0005);
			\coordinate (spyviewer) at (axis cs:5.3,0.025);	
			\spy[width=1.75cm,height=1.25cm, thin, spy connection path={\draw(tikzspyonnode.south west) -- (tikzspyinnode.south west);\draw (tikzspyonnode.south east) -- (tikzspyinnode.south east);
			\draw (tikzspyonnode.north west) -- (tikzspyinnode.north west);\draw (tikzspyonnode.north east) -- (intersection of  tikzspyinnode.north east--tikzspyonnode.north east and tikzspyinnode.south east--tikzspyinnode.south west);
			;}] on (spypoint) in node at (spyviewer);
		\coordinate (a) at ($(axis cs:-10.8/1.4,-0.12)+(spyviewer)$);
		\coordinate[label={[font=\small,text=black]right:$10^{-2}$}] (b) at ($(axis cs:+10.8/1.4,-0.12)+(spyviewer)$);

\end{axis}
\end{tikzpicture}
        \vspace*{-2mm}%
        \caption{ Performance of different decoders for code $\mathcal{C}_2(32,16)$ with ${\Delta(\bm{T})=10}$ and $\Delta(\bm{T}^{-1})=13$.} \label{figure:32_1610over} 
\end{figure}
\begin{figure}[t]
                \centering
        \definecolor{mycolor1}{rgb}{0.46600,0.67400,0.18800}%
\definecolor{mycolor2}{rgb}{0.49400,0.18400,0.55600}%
\definecolor{mycolor3}{rgb}{1.00000,0.44700,0.74100}%
\definecolor{mycolor4}{rgb}{0.00000,0.44700,0.74100}%
\definecolor{kit-green100}{rgb}{0,.59,.51}

\definecolor{kit-green100}{rgb}{0,.59,.51}
\definecolor{kit-green70}{rgb}{.3,.71,.65}
\definecolor{kit-green50}{rgb}{.50,.79,.75}
\definecolor{kit-green30}{rgb}{.69,.87,.85}
\definecolor{kit-green15}{rgb}{.85,.93,.93}
\definecolor{KITgreen}{rgb}{0,.59,.51}

\definecolor{KITpalegreen}{RGB}{130,190,60}
\colorlet{kit-maigreen100}{KITpalegreen}
\colorlet{kit-maigreen70}{KITpalegreen!70}
\colorlet{kit-maigreen50}{KITpalegreen!50}
\colorlet{kit-maigreen30}{KITpalegreen!30}
\colorlet{kit-maigreen15}{KITpalegreen!15}

\definecolor{KITblue}{rgb}{.27,.39,.66}
\definecolor{kit-blue100}{rgb}{.27,.39,.67}
\definecolor{kit-blue70}{rgb}{.49,.57,.76}
\definecolor{kit-blue50}{rgb}{.64,.69,.83}
\definecolor{kit-blue30}{rgb}{.78,.82,.9}
\definecolor{kit-blue15}{rgb}{.89,.91,.95}

\definecolor{KITyellow}{rgb}{.98,.89,0}
\definecolor{kit-yellow100}{cmyk}{0,.05,1,0}
\definecolor{kit-yellow70}{cmyk}{0,.035,.7,0}
\definecolor{kit-yellow50}{cmyk}{0,.025,.5,0}
\definecolor{kit-yellow30}{cmyk}{0,.015,.3,0}
\definecolor{kit-yellow15}{cmyk}{0,.0075,.15,0}

\definecolor{KITorange}{rgb}{.87,.60,.10}
\definecolor{kit-orange100}{cmyk}{0,.45,1,0}
\definecolor{kit-orange70}{cmyk}{0,.315,.7,0}
\definecolor{kit-orange50}{cmyk}{0,.225,.5,0}
\definecolor{kit-orange30}{cmyk}{0,.135,.3,0}
\definecolor{kit-orange15}{cmyk}{0,.0675,.15,0}

\definecolor{KITred}{rgb}{.63,.13,.13}
\definecolor{kit-red100}{cmyk}{.25,1,1,0}
\definecolor{kit-red70}{cmyk}{.175,.7,.7,0}
\definecolor{kit-red50}{cmyk}{.125,.5,.5,0}
\definecolor{kit-red30}{cmyk}{.075,.3,.3,0}
\definecolor{kit-red15}{cmyk}{.0375,.15,.15,0}

\definecolor{KITpurple}{RGB}{160,0,120}
\colorlet{kit-purple100}{KITpurple}
\colorlet{kit-purple70}{KITpurple!70}
\colorlet{kit-purple50}{KITpurple!50}
\colorlet{kit-purple30}{KITpurple!30}
\colorlet{kit-purple15}{KITpurple!15}

\definecolor{KITcyanblue}{RGB}{80,170,230}
\colorlet{kit-cyanblue100}{KITcyanblue}
\colorlet{kit-cyanblue70}{KITcyanblue!70}
\colorlet{kit-cyanblue50}{KITcyanblue!50}
\colorlet{kit-cyanblue30}{KITcyanblue!30}
\colorlet{kit-cyanblue15}{KITcyanblue!15}
\definecolor{mycolor1}{rgb}{0.46600,0.67400,0.18800}%
\definecolor{mycolor2}{rgb}{0.49400,0.18400,0.55600}%
\definecolor{mycolor3}{rgb}{1.00000,0.44700,0.74100}%
\definecolor{mycolor4}{rgb}{0.00000,0.44700,0.74100}%
\definecolor{kit-green100}{rgb}{0,.59,.51}
\begin{tikzpicture}[scale=0.92,spy using outlines={rectangle, magnification=2}]

\begin{axis}[%
width=.9\columnwidth,
height=5cm,
at={(0.758in,0.645in)},
scale only axis,
xmin=1,
xmax=6,
xlabel style={font=\color{white!15!black}},
xlabel={$E_{\mathrm{b}}/N_0$ ($\si{dB}$)},
ymode=log,
ymin=1e-06,
ymax=1,
ytick={1e-1,1e-2,1e-3,1e-4,1e-5,1e-6},
yminorticks=true,
ylabel style={font=\color{white!15!black}},
ylabel={FER},
axis background/.style={fill=white},
xmajorgrids,
ymajorgrids,
legend style={at={(0.03,0.03)}, anchor=south west, legend cell align=left, align=left, draw=white!15!black,font=\scriptsize}
]

\addplot [color=KITblue, line width=1.1pt,mark=o, mark options={solid, KITblue}]
  table[row sep=crcr]{%
0.00  6.329e-01\\
0.50  4.975e-01\\
1.00  3.704e-01\\
1.50  2.445e-01\\
2.00  1.447e-01\\
2.50  7.353e-02\\
3.00  2.595e-02\\
3.50  7.918e-03\\
4.00  2.134e-03\\
4.50  4.751e-04\\
5.00  5.337e-05\\
5.50  6.300e-06\\
6   7.11946581e-07\\
};
\addlegendentry{ML approx. $\mathcal{C}_{\mathrm{BCH}}$}

\addplot [color=kit-blue70, line width=1.1pt,mark=triangle, mark options={solid, kit-blue70}]
  table[row sep=crcr]{%
0.0 9.259259259259259300e-01\\
0.5 9.287925696594426794e-01\\
1.0 8.219178082191780366e-01\\
1.5 6.564551422319474527e-01\\
2.0 5.244755244755244794e-01\\
2.5 3.640776699029126262e-01\\
3.0 2.427184466019417508e-01\\
3.5 1.220504475183075699e-01\\
4.0 6.384337092998509933e-02\\
4.5 2.570253598355037861e-02\\
5.0 7.212231945379363705e-03\\
5.5 2.290006412017953855e-03\\
6.0 4.538124786329958112e-04\\
6.5 8.600000000000000331e-05\\
7.0 2.000000000000000164e-05\\
};
\addlegendentry{BP\--30 $\mathcal{C}_{\mathrm{BCH}}$}

\addplot [color=black, line width=1.1pt]
  table[row sep=crcr]{%
0	6.91358025e-01\\
0.5 5.67951318e-01\\
1   4.28790199e-01\\
1.5 2.59981430e-01\\
2   1.61290323e-01\\
2.5 6.74698795e-02\\
3   2.75834893e-02\\
3.5 9.06618314e-03\\
4   2.19662974e-03\\
4.5 5.14540359e-04\\
5   9.50628433e-05\\
5.5 1.72278718e-05\\
6   2.31674393e-06\\ 
};
\addlegendentry{ML approx.}

\addplot [color=kit-orange100, line width=1.1pt, mark=|, mark options={solid,kit-orange100}]
  table[row sep=crcr]{%
0.0 9.404388714733542542e-01\\
0.5 9.036144578313253239e-01\\
1.0 8.174386920980926119e-01\\
1.5 6.696428571428570953e-01\\
2.0 5.825242718446601575e-01\\
2.5 3.989361702127659504e-01\\
3.0 2.500000000000000000e-01\\
3.5 1.462701121404192950e-01\\
4.0 6.096321885795569911e-02\\
4.5 2.329192546583850817e-02\\
5.0 8.324315325064514035e-03\\
5.5 2.211736950751990761e-03\\
6.0 5.413432892379149881e-04\\
};
\addlegendentry{BP\--30}

\addplot [color=KITgreen, line width=1.1pt,mark=x, mark options={solid, KITgreen}]
  table[row sep=crcr]{%
0.0 9.090909090909090606e-01\\
0.5 8.620689655172413257e-01\\
1.0 7.537688442211055717e-01\\
1.5 6.160164271047228191e-01\\
2.0 4.862236628849270770e-01\\
2.5 3.401360544217686965e-01\\
3.0 2.139800285306704686e-01\\
3.5 1.088929219600725945e-01\\
4.0 5.937067088858104247e-02\\
4.5 2.377744313228184053e-02\\
5.0 7.464543418760885909e-03\\
5.5 2.822918332972627117e-03\\
6.0 5.933943342708963538e-04\\
};
\addlegendentry{R-3-BP\--10}

\addplot [color=KITred, line width=1.1pt, mark=asterisk, mark options={solid, KITred}]
  table[row sep=crcr]{%
0.0 9.523809523809523281e-01\\
0.5 8.645533141210374195e-01\\
1.0 7.874015748031496509e-01\\
1.5 6.651884700665188221e-01\\
2.0 5.802707930367504430e-01\\
2.5 4.291845493562231884e-01\\
3.0 2.475247524752475226e-01\\
3.5 1.280409731113956451e-01\\
4.0 6.006006006006005954e-02\\
4.5 2.109704641350210880e-02\\
5.0 6.363887062217602093e-03\\
5.5 1.632173402102239310e-03\\
6.0 2.709999999999999735e-04\\
};
\addlegendentry{GAED-3\--BP\--10}

\addplot [color=kit-red70, line width=1.1pt,loosely dashed, mark=square, mark options={solid, kit-red70}]
  table[row sep=crcr]{%
0.0 9.345794392523364413e-01\\
0.5 8.522727272727272929e-01\\
1.0 7.731958762886598224e-01\\
1.5 6.289308176100628645e-01\\
2.0 5.084745762711864181e-01\\
2.5 3.083247687564234507e-01\\
3.0 2.036659877800407470e-01\\
3.5 1.139817629179331326e-01\\
4.0 4.068348250610252098e-02\\
4.5 1.467136150234741795e-02\\
5.0 3.835581410215431735e-03\\
5.5 8.125413041829626301e-04\\
6.0 1.190000000000000058e-04\\
};
\addlegendentry{GAED-3\--BP-30 }

\coordinate (spypoint) at (axis cs:5.12,0.0006);
			\coordinate (spyviewer) at (axis cs:4.8,0.04);	
			\spy[width=1.75cm,height=1.25cm, thin, spy connection path={\draw(tikzspyonnode.south west) -- (tikzspyinnode.south west);
   \draw (tikzspyonnode.south east) -- (tikzspyinnode.south east);
			\draw (tikzspyonnode.north east) -- (tikzspyinnode.north east);
   \draw (tikzspyonnode.north west) -- (intersection of  tikzspyinnode.north west--tikzspyonnode.north west and tikzspyinnode.south east--tikzspyinnode.south west);
			;}] on (spypoint) in node at (spyviewer);
		\coordinate (a) at ($(axis cs:-10.8/1.4,-0.12)+(spyviewer)$);
		\coordinate[label={[font=\small,text=black]right:$10^{-2}$}] (b) at ($(axis cs:+10.8/1.4,-0.12)+(spyviewer)$);

\end{axis}
\end{tikzpicture}
        \vspace*{-2mm}%
        \caption{ Performance of different decoders for code $\mathcal{C}_3(63,45)$ with $\Delta(\bm{T})=5$ and $\Delta(\bm{T}^{-1})=6$.
        } \label{figure:63_45} 
\end{figure}
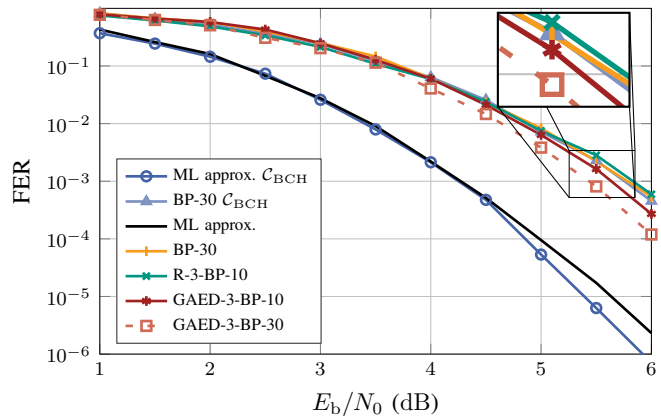

We observe that GAED-3\--BP\--10 results in lower error rates compared to BP\--30. 
For code $\mathcal{C}_{1}$, GAED-3\--BP-10 shows a gain of $0.6\,\si{dB}$ compared to BP\--30 at an FER of $10^{-3}$ and also outperforms R-3-BP-10 by $1\,\si{dB}$. 
When decoding $\mathcal{C}_2$ and $\mathcal{C}_3$, GAED-3\--BP-10 also is able to yield a gain over BP-30. 
For code $\mathcal{C}_{2}$, R-3-BP-10 obtains a gain compared to GAED-3\--BP-10. However, in higher SNR regime, GAED-3-BP-10 is able to close the gap to R-3-BP-10.

Comparing GAED-3-BP-30 and BP-90, i.e., when decoding with increased complexity, similar improvements can be observed. 
For code $\mathcal{C}_3$, the performance of GAED-3-BP-10 and BP-90 coincide. Hence, the latter was omitted for clarity. 
As additional reference, we simulated the performance of BP-30 for $\mathcal{C}_{\mathrm{BCH}}$ 
using the optimized PCM \emph{1min} from \cite{channelcodes}. It can be observed that the performance of BP-30 for codes $\mathcal{C}_3$ and $\mathcal{C}_{\mathrm{BCH}}$ coincide over the whole SNR regime. 
Our simulations indicate that increasing the number
of iterations does not yield further improvement for all decoders of $\mathcal{C}_3$.
Note that GAED for $\mathcal{C}_2$ and $\mathcal{C}_3$ relies on elements from ${\mathrm{GAut}(\mathcal{C})\setminus \mathrm{Aut}(\mathcal{C})}$, serving as proof that the generalized automorphism group can, in fact, improve decoding.

\vspace*{-1mm}

\section{Conclusion}
\vspace{-1mm}
\label{sec:conclusion}
In this paper, we have shown that the application of the more general definition of automorphisms prevailing in linear algebra
in GAED can be used to improve decoding compared to BP decoding. 
One important advantage is that this more general definition is expected to simplify the search for suitable transformation significantly. 
To this end, we first analyzed generalized automorphisms of linear codes based on non\--singular mappings of $\mathbb{F}^n$ and proved a specification of their structure introducing the CCM. Then, we described the resulting effects at the receiver and reasoned that generalized automorphisms should possess sparse matrices to prevent severe information loss.
Additionally, we introduced a method to construct linear codes together with potentially sparse automorphisms. 
Finally, we discussed the decoding performance of GAED for three exemplary constructed codes with varying code sizes and rates.
In all cases, GAED improved decoding when compared to equal complexity BP decoding. 
Therefore, this approach is very promising to enable alternative code designs and to improve decoding performance.

\newpage
\appendix

\begin{proof}[Proof of Theorem \ref{lemma:codeinfo_in_a}]
Consider a linear code $\mathcal{C}$ with PCM $\bm{H}$.
From Theorem \ref{theorem:core}, it follows that there exists at least one CCM $\bm{A}\in \mathrm{GL}_n( \mathbb{F})$ to transform $\bm{H}$ into $\tilde{\bm{H}}$. If $\tilde{\bm{H}}$ 
is multiplied from the right with some matrix $\bm{Z}\in \mathcal{Z}$, 
then $\tilde{\bm{H}} \bm{Z} =\tilde{\bm{H}}$ holds. Thus, $\bm{A}_1=\bm{A} \bm{Z}$ also transforms $\bm{H}$ into the desired form. Therefore, the matrix $\bm{A}$ is not unique.

If $\bm{A}^{-1}$ has the structure described in  (\ref{equation:ainv_code_info}) and the code rate $r$ is known, then 
 the PCM $\bm{H}$ of $\mathcal{C}$ can be extracted from the inverse CCM.
 Thus, $\bm{A}^{-1}$ characterizes $\mathcal{C}$ because a linear code is fully defined by its PCM. Consequently, $\bm{A}$ also must characterize the code. 
To prove  (\ref{equation:ainv_code_info}), $\tilde{\bm{H}}$ is multiplied with $\bm{A}^{-1}$ from the right. Assuming that $$\bm{A}^{-1}=\begin{pmatrix}
    \bm{U}_{(n-k)\times n}\\
    \bm{\Lambda}_{k\times n}
\end{pmatrix},$$
it can be seen that
  $  \tilde{\bm{H}} \bm{A}^{-1} = \bm{H} \bm{A} \bm{A}^{-1}=\bm{H}$ and
\begin{align*}
        &\,\tilde{\bm{H}} \bm{A}^{-1} =
    \begin{bmatrix}
     \bm{I}_{(n-k)\times (n-k)}&\bm{0}_{(n-k)\times k}
    \end{bmatrix}  \begin{pmatrix}
        \bm{U}\\
        \bm{\Lambda}
    \end{pmatrix} \overset{!}{=}\bm{H}\\
   \Longleftrightarrow &\, \bm{U} + \bm{0}_{(n-k)\times k} =\bm{U}\overset{!}{=}\bm{H}.
\end{align*}
Therefore, the PCM is contained within $\bm{A}^{-1}$.
\end{proof}

\begin{proof}[Proof of Theorem \ref{theorem:frob_in:z}]
 Let $d_1,\dots d_{j}$ be the sizes of the block matrices of the Frobenius normal form $\bm{F}$ of $\bm{T}$ and let there exist a subset $$\mathcal{J}\subseteq \{1,\dots,j\}=:\mathcal{I}$$ with
     ${\sum_{i\in \mathcal{J}} d_i=k}$. Then, because the sizes of the block matrices of the Frobenius normal form necessarily sum up to $n$, i.e., ${\sum_{i\in \mathcal{I}} d_i=n}$,
     it follows that ${\sum_{i\in \mathcal{I}\setminus \mathcal{J}} d_i=n-k}$.
      Without loss of generality, assume that $\mathcal{J}=\{1,\dots,m\}$.
     Define the matrices
     \begin{align*}
      \bm{F}_{\mathcal{J}}&:=\begin{pmatrix}
      \bm{B}_{f_1}& 0& \cdots&0 \\
      0 & \ddots & \ddots &\vdots \\
      \vdots & \ddots & \ddots  & 0 \\
      0 & \cdots & 0 & \bm{B}_{f_{m}} 
      \end{pmatrix}\in  \mathbb{F}^{k\times k},\text{ and }\\
      \bm{F}_{\mathcal{I}\setminus\mathcal{J}}&:=\begin{pmatrix}
      \bm{B}_{f_{m+1}}& 0& \cdots&0 \\
      0 & \ddots & \ddots &\vdots \\
      \vdots & \ddots & \ddots  & 0 \\
      0 & \cdots & 0 & \bm{B}_{f_{j}} 
      \end{pmatrix} \in  \mathbb{F}^{(n-k)\times (n-k)}.
     \end{align*}
    Then, $\bm{F}_{\mathcal{J}}$ and $\bm{F}_{\mathcal{I}\setminus\mathcal{J}}$ are of the form
    $$
   \begin{pmatrix}
    \setlength{\arraycolsep}{1pt}
      *&*&0&\cdots&0 \\
      \vdots&\ddots&\ddots&\ddots &\vdots \\
      \vdots&\ddots&\ddots &\ddots &0\\
      *&\cdots&\cdots &* & *\\
     * & *&\cdots &\cdots&*
    \end{pmatrix},
    $$
and there exists
\begin{align*}
\bm{F}&=\begin{pmatrix}
\bm{F}_{\mathcal{I}\setminus\mathcal{J}}&\bm{0}\\
    \bm{0}&\bm{F}_{\mathcal{J}}
\end{pmatrix}\\
&=\left(\begin{smallmatrix}
\setlength{\arraycolsep}{0.1pt}
        *&*&0&\cdots&0&                      &&&&\\
        \vdots&\ddots&\ddots&\ddots &\vdots& &&&& \\
        \vdots&\ddots&\ddots &\ddots&\vdots&         &&\bm{0}_{(n-k)\times k} && \\
        *&\cdots&\cdots &* & *                &&&&\\
        * & *&\cdots &\cdots&*                &&&&\\
        &&&&    &*&*&0&\cdots&0 \\
        &&&&    &\vdots&\ddots&\ddots&\ddots &\vdots \\
        &&\bm{0}_{k\times(n-k)} &&    &\vdots&\ddots&\ddots &\ddots &0\\
        &&&&    &*&\cdots&\cdots &* & *\\
        &&&&    &* & *&\cdots &\cdots&*
\end{smallmatrix}\right)\\
&\in \mathcal{Z}(n,k)
\end{align*}
Similarly, there exists $\bm{F}\in \mathcal{Z}(n,n-k)$.
Therefore, according to Theorem \ref{lemma:codeinfo_in_a}, $\mathcal{C}(n,k)$ as well as $\mathcal{C}(n,n-k)$ can be constructed.
\end{proof}
\end{document}